\newtheorem{defn}{Definition}
\newtheorem{theorem}{Theorem}
\newtheorem{lemma}{Lemma}
\newtheorem{cor}{Corollary}
\newcommand{\cA}{{\cal A}}
\newcommand{\cG}{{\cal G}}
\newcommand{\cH}{{\cal H}}
\newcommand{\bfx}{{\boldsymbol x}}
\newcommand{\bfy}{{\boldsymbol y}}
\newcommand{\al}{\alpha}
\newcommand{\bet}{\beta}
\newcommand{\Ftk}{\smash{\mathbb{F}_{\!2}^{\hspace{1pt}k}}}
\newcommand{\C}{\mathbb{C}}
 \DeclareMathOperator{\dist}{d_H}
\newcommand{\E}{\sf E}
\begin{document}
\bibliographystyle{IEEEtran}
\title{Multidimensional Flash Codes}
\author{\authorblockN{Eitan Yaakobi, Alexander Vardy, Paul H. Siegel, and Jack K. Wolf}
\authorblockA{University of California, San Diego\\ La Jolla, CA $92093-0401$, USA \\ Emails: eyaakobi@ucsd.edu, avardy@ucsd.edu, psiegel@ucsd.edu, jwolf@ucsd.edu}}
\maketitle
\begin{abstract}
Flash memory is a non-volatile computer memory comprised of blocks
of cells, wherein each cell can take on $q$ different levels
corresponding to the number of electrons it contains. Increasing
the cell level is easy; however, reducing a cell level forces all
the other cells in the same block to be erased. This erasing
operation is undesirable and therefore has to be used as
infrequently as possible. We consider the problem of designing
codes for this purpose, where $k$ bits are stored using a block of
$n$ cells with $q$ levels each. The goal is to maximize the number
of bit writes before an erase operation is required. We present an
efficient construction of codes that can store an arbitrary number
of bits. Our construction can be viewed as an extension to
multiple dimensions of the earlier work of Jiang and Bruck, where
single-dimensional codes that can store only $2$ bits were
proposed.
\end{abstract}

\section{Introduction}\label{sec:introduction}
Flash memories are, by far, the most important type of
non-volatile computer memory in use today. They are employed
widely in mobile, embedded, and mass-storage applications, and the
growth in this sector continues at a staggering pace.

A flash memory consists of an array of floating-gate \emph{cells},
organized into \emph{blocks} (a typical block comprises $2^{17}$
to $2^{20}$ cells). Hot-electron injection~\cite{KS67} is used to
inject electrons into a cell, where they become trapped. The
Fowler-Nordheim tunneling~\cite{vZ97} mechanism (field emission)
can~be used to remove electrons from an entire block of cells,
thereby discharging them. The level or ``state'' of a cell is a
function of the amount of charge (electrons) trapped within it.
Historically, flash cells have been designed to store only two
values (one bit); however, \emph{multilevel flash cells} are
actively being developed and are already in use in some
devices~\cite{CGOZ99,GT05}. In multilevel flash cells, voltage is
quantized to $q$ discrete threshold values, say
$0,1,\ldots,q{-}1$. The parameter $q$ can range from $q=2$ (the
conventional two-state case) up to $q = 256$.

The most conspicuous property of flash storage is its inherent
asymmetry between cell programming (charge placement) and cell
erasing (charge removal). While adding charge to a single cell is
a fast and simple operation, removing charge from a cell is very
difficult. In fact, flash memories \emph{do not allow} a single
cell to be erased --- rather \emph{only entire blocks} (comprising
up to $2^{20}$ cells) can be erased. Thus, a single-cell erase
operation requires the cumbersome process of copying an entire
block to a temporary location, erasing it, and then re-programming
all the cells except one. Moreover, since over-programming
(raising the charge of a cell above its intended level) can only
be corrected by a block erasure, in practice a conservative
procedure is used for programming a cell. Charge is injected into
the cell over numerous rounds; after every round, the charge level
is measured and the next-round injection is configured, so that
the charge gradually approaches its desired level. All this is
extremely costly in time and energy.

Codes designed to address this problem were first introduced
in~\cite{Jiang-Allerton,J07,JBB07}, and are called \emph{floating
codes}. Here we address these codes slightly differently under the
name of \emph{flash codes}. Flash codes are a sweeping
generalization of write-once memory codes~\cite{CGM86,FS84,RS82},
designed to maximize the number of times information can be
rewritten before block erasures are required. In a nutshell, the
idea is to use $n$ $q$-level cells to store $k < n \log_2 \!q$
bits, thereby storing less bits than possible (the rate of the
flash code is $k/(n \log_2\! q)$\,). The bits are represented in a
clever way to guarantee that every sequence of up to $t$ writes
(of a single bit) does not lead to any of the $n$ cells exceeding
its maximum value $q-1$. Recently, several more papers have
appeared
\cite{BJB07,CSBB07,ER99,GCKT03,GLR03,JB08,JMSB08,JSB08,MLF08} that
discuss coding techniques for this model of flash memories.

Let us begin by giving a precise definition of \emph{flash codes}.
An insightful way to do so is in terms of a pair of graphs and a
pair of mappings between these graphs.\ The first graph is the
familiar hypercube $\cH_k$.\ The vertices of $\cH_k$, called the
\emph{variable vectors}, are the $2^k$ binary~vec\-tors of length
$k$, with two such vertices $\al,\beta \in\Ftk$ being adjacent iff
$\dist(\al,\beta) = 1$ ($\dist(\al,\beta)$ denotes the Hamming
distance between $\al$ and $\beta$). This graph constitutes the
state transition diagram of the $k$ information bits. A single-bit
write operation corresponds to the traversal of an edge in
$\cH_k$, and a sequence of $t$ writes is a \emph{walk of length
$t$}~in~$\cH_k$. To describe the second graph, set $\cA_q =
\{0,1,\dots,q-1\}$, and think of $\cA_q$ as a subset of the
integers. Now consider the \emph{directed} graph \smash{$\cG_n$}
whose vertices are the $q^n$ vectors of length $n$ over
\smash{$\cA_q$}, and are called the \emph{cell state vectors}.
There is a directed edge from \smash{$\bfx \in \cA_q^n$} to
\smash{$\bfy \in \cA_q^n$} in $\cG_n$ iff $\dist(\bfx,\bfy) = 1$
and in the single position $i$ where $\bfx$ and $\bfy$ differ, we
have $y_i = x_i + 1$. The graph $\cG_n$ is the state transition
diagram of $n$ flash memory cells. Observe that there is a path
from $\bfx$ to $\bfy$ in $\cG_n$ iff $y_i \,{\ge}\, x_i$ for all
$i = 1,2,\dots,n$,~which~reflects the condition that the charge of
memory cells can only increase. The graphs $\cH_k$ and $\cG_n$ are
illustrated in Figure~\ref{fig:1 state transition diagrams} for
the case $k=3$, $n=2$, and $q = 8$.
\begin{figure}
\begin{center}
\vspace{-1.7in} \hspace*{0.5ex}
\raisebox{0.30in}{\includegraphics[totalheight=0.35\textheight]{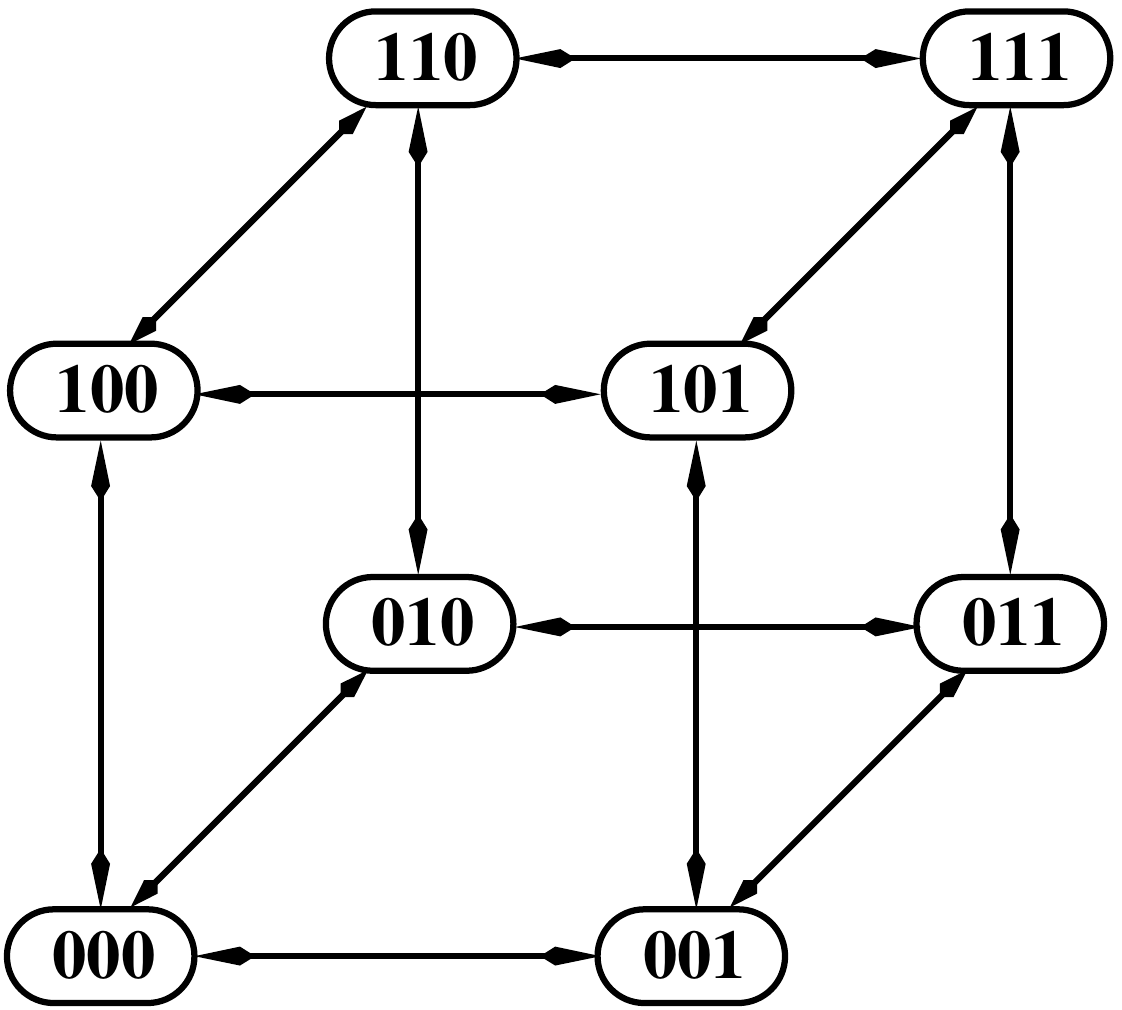}}
\hspace*{-18.5ex}
{\includegraphics[totalheight=0.27\textheight]{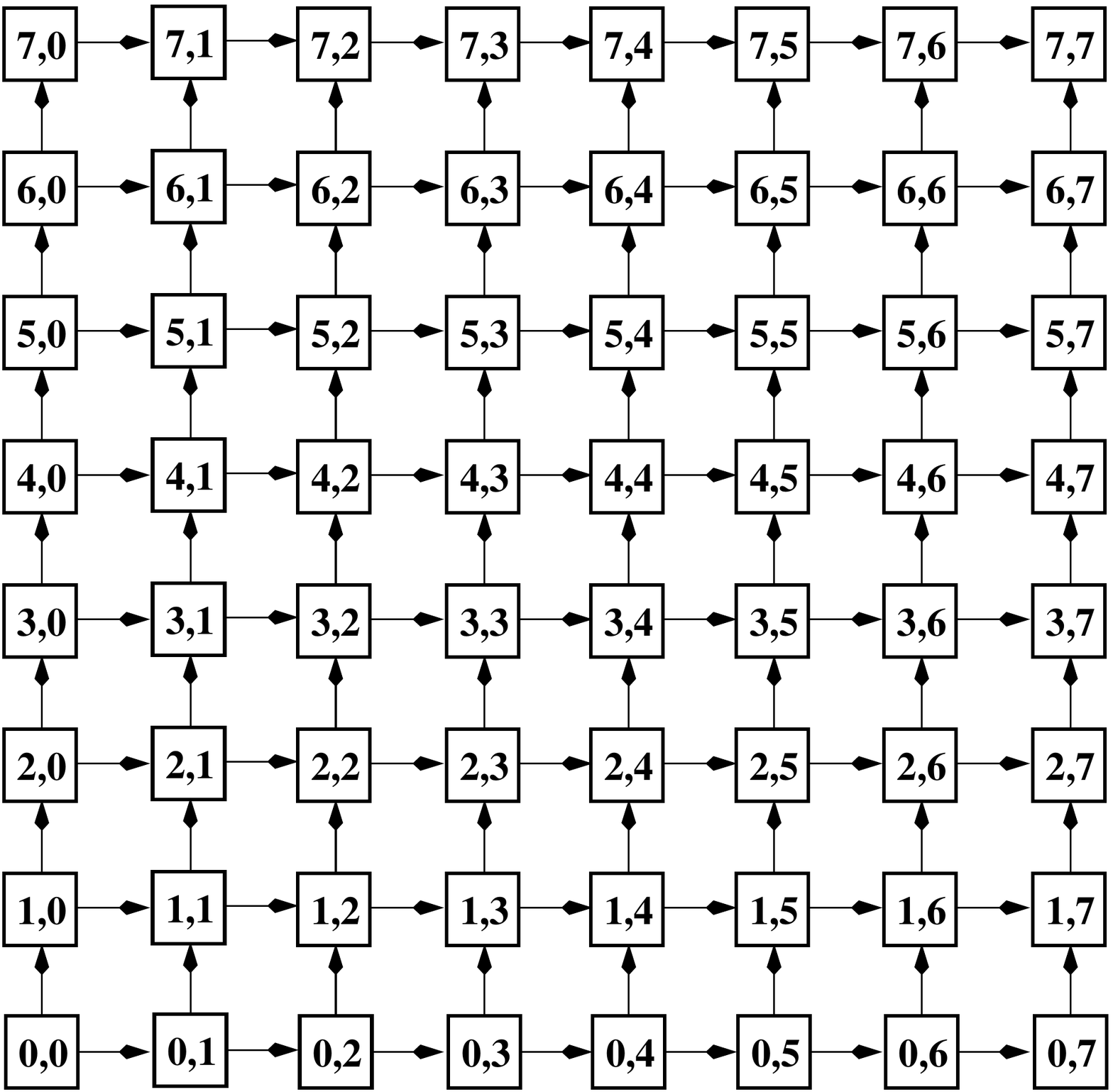}}
\caption{State transition diagrams for $k=3$ bits, and for $n=2$
memory cells with $q=8$ levels.} \label{fig:1 state transition
diagrams} \vspace{-3.5ex}
\end{center}
\end{figure}

An \smash{$(n,k)_q$} flash code $\C$ can now be specified in terms
of two maps: a decoding map $\Delta$~and a~transition map $f$. The
decoding map \smash{$\Delta : \cA_q^n \to \Ftk$} simply indicates
for each cell state vector $\bfx \in V(\cG_n)$~the~value of the
variable vector associated with the corresponding cell state
vector. This map can be, in principle, arbitrary, although it must
be chosen carefully in order to obtain flash codes with good
performance. For each $\al \in \Ftk$, let $\cG_n(\al)$ denote the
set of vertices $\bfx \in V(\cG_n)$ such that $\Delta(\bfx) =
\al$. With this, the transition map \smash{$f : E(\cH_k) {\times}
\cA_q^n \to \cA_q^n \cup \{\E\}$} can be described as follows. For
every edge $(\al,\bet) \in E(\cH_k)$ and every vertex $\bfx \in
\cG_n(\al)$, the value of $f(\al,\bet;\bfx)$ is a vertex $\bfy \in
\cG_n(\bet)$ such that $y_i \,{\ge}\, x_i$ for all $i$ (so that
there exists a path from $\bfx$ to $\bfy$ in $\cG_n$). Or, if no
such vertex exists, $f(\al,\bet;\bfx) = \E$ indicating that a
block erasure is required. It is clear from the definition that
$(\Delta,f)$ map walks in $\cH_k$ onto directed paths in $\cG_n$,
potentially terminating in the block-erasure symbol $\E$.
\begin{defn}
A flash code $\C(\Delta,f)$ \emph{guarantees $t$ writes} if all
walks of length $t$~in~$\cH_k$, starting at the vertex
$(0,0,\dots,0)$, map onto valid paths in $\cG_n$, never producing
the symbol $\E$.
\end{defn}

The weight of a cell state vector \smash{$\bfx \in \cA_q^n$} is
defined to be $w_{\smash{\bfx}}= \sum_{i=1}^nx_i$, and for
convenience will be called the \emph{cell state weight}. We note
that at each write operation the increase in the cell state weight
is at least one, and hence a trivial upper bound for the number of
writes, $t$, is $n(q-1)$.
\begin{defn}
If a flash code guarantees at least $t$ writes before erasing,
then its \emph{write deficiency} (or simply \emph{deficiency}) is
defined to be $\delta= n(q-1)-t$.
\end{defn}

In~\cite{JBB07}, a code for storing two bits is presented. The
code is constructed for arbitrary $n$ and $q$ and guarantees
$t=(n-1)(q-1)+\left\lfloor\frac{q-1}{2}\right\rfloor$ writes. A
general upper bound on $t$ which holds for any $k,\ell,n,q$
($\ell$ is the variable alphabet size, and usually $\ell =2$) is
presented as well, and assures that this two-bits construction is
optimal.
\begin{theorem}\cite{JBB07}
For any code that uses $n$ $q$-level cells and guarantees $t$
writes before erasing, if $n\geq k(l-1)-1$, then $t\leq
(n-k(l-1)+1)\cdot (q-1) + \left\lfloor\frac{ (k(l-1)-1)\cdot
(q-1)}{2}\right\rfloor$; if $n< k(l-1)-1$, then $t\leq
\left\lfloor\frac{ n (q-1)}{2}\right\rfloor$.
\end{theorem}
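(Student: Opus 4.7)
\emph{Proof plan.}
My approach is to produce, for every flash code $\C(\Delta,f)$ that guarantees $t$ writes, an adversarial walk in the variable graph $\cH_k$ whose realization forces the cell-state weight to exceed the capacity $n(q-1)$ as soon as the walk's length exceeds the claimed bound. Throughout, write $m := k(\ell-1) - 1$.

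The starting point is a monotonicity observation. For any edge $(\alpha,\beta) \in E(\cH_k)$ and any $\bfx \in \cG_n(\alpha)$, the write $\bfy = f(\alpha,\beta;\bfx)$ satisfies $\bfy \geq \bfx$ coordinate-wise and $\Delta(\bfy) = \beta \neq \alpha$, hence $\bfy \neq \bfx$ and $w_{\bfy} \geq w_{\bfx} + 1$. Iterating, a round-trip $\bfx \to \bfy \to \bfx'$ along the same $\cH_k$-edge $(\alpha,\beta)$ forces $\bfx' > \bfy > \bfx$ componentwise, so $w_{\bfx'} \geq w_{\bfx} + 2$. Thus alternating toggles along a single edge burn cell-weight capacity at twice the rate of toggles that never revisit a variable value.

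The adversarial walk then exploits this slowdown on multiple ``bottleneck'' positions. Concretely, after an efficient phase that accounts for roughly $(n - m)(q - 1)$ units of cell weight (one write per unit), the adversary switches to an alternating phase targeted at $m$ carefully chosen edges of $\cH_k$ whose induced cell updates are forced — by the fact that $\Delta$ must separate $\ell^k$ distinct variable vectors — to draw from a common pool of $m(q - 1)$ units of cell-weight capacity. By the monotonicity observation, every write in the alternating phase costs two units of weight, so this phase contributes at most $\lfloor m(q-1)/2\rfloor$ writes, and summing with the efficient phase yields the first inequality. When $n < m$, there is no room for an efficient phase; every write belongs to the slow regime, giving $t \leq \lfloor n(q - 1)/2 \rfloor$.

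The main obstacle is the second phase: to prove that, no matter how $\Delta$ is chosen, the adversary can always locate $m$ edges of $\cH_k$ whose cell updates genuinely overlap on $m(q - 1)$ units of weight so that the bottleneck truly binds. I expect this to reduce to a pigeonhole argument on $\Delta$ — since $\Delta$ must distinguish $\ell^k$ values but only $q^n$ cell states are available, it cannot be too diffuse — combined with an amortized potential-function argument that tracks, cell by cell, how much of the $q-1$-unit capacity remains accessible to efficient writes versus bottlenecked ones. Parity accounts for the $\lfloor\cdot\rfloor$; this bookkeeping is where most of the technical effort lies.
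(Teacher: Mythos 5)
The theorem is stated in the paper with a citation to~\cite{JBB07}; the paper does not give a proof of its own, so there is no in-paper argument to compare against, and I will assess your proposal on its merits.

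Your high-level shape — an efficient phase worth roughly $(n-m)(q-1)$ writes followed by a slowdown phase worth roughly $m(q-1)/2$ writes, with $m=k(\ell-1)-1$ — is the right silhouette. But the mechanism you propose for the slowdown does not actually slow anything down. Two smaller problems first: the round-trip claim $\bfx' > \bfy > \bfx$ ``componentwise'' is false; what you actually get from $\bfy=f(\al,\bet;\bfx)$, $\bfx'=f(\bet,\al;\bfy)$ is $\bfx'\geq\bfy\geq\bfx$ coordinatewise together with $\bfx'\neq\bfy$, $\bfy\neq\bfx$, hence $w_{\bfx'}\geq w_{\bfx}+2$. More importantly, this observation is content-free: \emph{every} write increases weight by at least one, so any two consecutive writes — alternating or not — cost at least two units. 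A round trip along one edge therefore burns weight at exactly the baseline rate of one unit per write, not twice that rate. The sentence ``alternating toggles burn capacity at twice the rate of toggles that never revisit a variable value'' is the load-bearing claim of your argument, and it is not true as stated. Consequently the second phase, as you have set it up, yields no bound better than the trivial $t\leq n(q-1)$.

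The actual slowdown used in~\cite{JBB07} comes from a different, genuinely structural pigeonhole, and it is the step you defer to at the end as ``the main obstacle.'' Fix a cell state $\bfx$ with $\Delta(\bfx)=\al$, and look at \emph{all} $k(\ell-1)$ legal next states $\bfy_1,\dots,\bfy_{k(\ell-1)}$ (one per neighbor $\bet$ of $\al$). They are pairwise distinct because they decode to distinct variable vectors, and each satisfies $\bfy_j\geq\bfx$, $\bfy_j\neq\bfx$. If every $\bfy_j$ had weight exactly $w_{\bfx}+1$, the difference vectors $\bfy_j-\bfx$ would be $k(\ell-1)$ \emph{distinct} standard unit vectors, each supported on a coordinate $i$ with $x_i<q-1$; that requires at least $k(\ell-1)=m+1$ such free coordinates. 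Contrapositive: once the number of cells below level $q-1$ is at most $m$, a greedy adversary (pick the neighbor maximizing the resulting weight) forces a weight increase of at least two. This is a pigeonhole on the \emph{free cells of the current state}, not a pigeonhole on $\Delta$ as a map from $q^n$ states to $\ell^k$ values — your closing speculation points in the wrong direction, since $\Delta$ is many-to-one and nothing prevents it from being extremely ``diffuse.'' What you still owe, even granting this lemma, is the bookkeeping that converts ``free cells $\leq m$ implies weight increase $\geq 2$'' plus ``free cells $\leq m$ at that moment implies weight already $\geq (n-m)(q-1)$'' into the exact bound with its floor, and the separate handling of the $n<m$ regime where the slowdown applies from the very first write. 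As written, the proposal has the correct silhouette but a wrong central mechanism and an unproven key lemma, so it does not yet constitute a proof.
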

This bound provides us also with a lower bound on the write
deficiency of flash codes.
\begin{cor}\label{cor:deficiency bound}
For any code that uses $n$ $q$-level cells and guarantees $t$
writes before erasing, the code deficiency satisfies  $\delta\geq
(k(l-1)-1)\cdot (q-1) - \left\lfloor\frac{ (k(l-1)-1)\cdot
(q-1)}{2}\right\rfloor$ if $n\geq k(l-1)-1$, and $\delta\geq
n(q-1) - \left\lfloor\frac{ n (q-1)}{2}\right\rfloor$ if $n<
k(l-1)-1$.
\end{cor}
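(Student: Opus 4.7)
The corollary is an immediate algebraic consequence of the preceding theorem together with the definition $\delta = n(q-1) - t$. My plan is therefore just to rearrange the two cases of the theorem's upper bound into the corresponding lower bounds on $\delta$, with essentially no new content beyond substitution.

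First, I would note that by Definition~2 the deficiency satisfies $\delta = n(q-1) - t$, so any upper bound on $t$ converts directly into a lower bound on $\delta$ by negating and adding $n(q-1)$. Then I would split into the two cases of the theorem. In the case $n \ge k(\ell-1) - 1$, I would substitute the theorem's bound and simplify:
\begin{align*}
\delta &= n(q-1) - t \\
&\ge n(q-1) - (n - k(\ell-1) + 1)(q-1) - \left\lfloor \tfrac{(k(\ell-1)-1)(q-1)}{2} \right\rfloor \\
&= (k(\ell-1)-1)(q-1) - \left\lfloor \tfrac{(k(\ell-1)-1)(q-1)}{2} \right\rfloor,
\end{align*}
which is exactly the first claimed inequality. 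For the case $n < k(\ell-1) - 1$, the theorem gives $t \le \lfloor n(q-1)/2\rfloor$, so $\delta \ge n(q-1) - \lfloor n(q-1)/2 \rfloor$, matching the second inequality.

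Since this is purely an arithmetic reformulation of the previous theorem, there is no genuine obstacle; the only thing to be a little careful about is that the floor term is subtracted (not added) after moving $t$ to the other side, and that the linear $n(q-1)$ terms cancel exactly in the first case. No further machinery is needed.
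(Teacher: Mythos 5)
Your proof is correct and is exactly the intended derivation: the paper treats the corollary as an immediate consequence of the preceding theorem and Definition~2, and your substitution and cancellation of the $n(q-1)$ terms matches that reasoning precisely.
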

Furthermore, the bound in~\cite{JBB07} implies that for $n$ large
enough the write deficiency of the code is not dependent on $n$.
\begin{defn}
Let $q$ be a fixed number of cell levels, and $k$ a fixed number
of variables. A family of \smash{$(n_i,k)_q$} flash codes $\C_i$
(where $\lim_{i\rightarrow\infty}n_i = \infty$ ) that guarantees
$t(n_i,q)$ writes is called  \emph{asymptotically optimal} if
$$\lim_{i\rightarrow\infty}\frac{t(n_i,q)}{n_i(q-1)} = 1.$$
\end{defn}

Asymptotically optimal constructions are presented in~\cite{JBB07}
for storing two and three bits. These constructions are later
enhanced and generalized for $3\leq k\leq 6$ bits in~\cite{JB08}.
Also, in~\cite{JB08} a new construction of codes, called
\emph{indexed codes}, is presented and supports the storage of an
arbitrary number of variables. This construction, though shown to
be asymptotically optimal, has deficiency that is dependent on the
number of cells, $n$, and hence is still far from the lower bound
on the deficiency.

The rest of the paper is organized as follows. In
Section~\ref{sec:two bits} we present another optimal construction
for storing two bits. In Section~\ref{sec:basic muldimensional
construction} we demonstrate the basic idea of how to represent an
arbitrary number of bits inside a multidimensional box. Our main
construction is given in Section~\ref{sec:enhanced
multidimensional construction}. We construct codes for efficient
storage of bits such that the write deficiency does not depend on
the number of cells. Finally, conclusions and an open problem are
given in Section \ref{sec:conclusion}.

\section{Another Optimal Construction for Two Bits}\label{sec:two bits}
The construction presented in~\cite{JBB07} for storing two bits
using an arbitrary number of $q$-level cells is optimal. We
present here another optimal construction which we believe is
simpler. In this construction, the leftmost (rightmost) cells of
the memory correspond to the first (second) bit. In writing, if we
change the first (second) bit then we increase by one the leftmost
(rightmost) cell having level less than $q-1$. We repeat this
process until the cells coincide and then the only cell of level
less than $q-1$ represents the two bits. In general, the cell
state vector has the following form:
$(q-1,\ldots,q-1,x_i,0,\ldots,0,x_j,q-1,\ldots,q-1)$, where $0<
x_i,x_j\leq q-1$. We present here the construction for odd values
of $q$, but it is easily modified to handle even values as well.\\
\textbf{Encoding:} We consider the following cases:
\begin{enumerate}
    \item There are at least two cells of level less than $q-1$.
If we change the level of $v_1(v_2)$ then we increase by one the
leftmost (rightmost) cell having level less than $q-1$. If after
this change there is only one cell of level less than $q-1$, then
it has to represent two bits. We change its level such that its
residue modulo $4$ corresponds to the four possible variable
vectors. If the cell level has residue modulo $4$ equal to
$0,1,2,$ or $3$, then the corresponding variable vector is
$(0,0),(0,1),(1,0),$ or $(1,1)$, respectively.

    \item There is only one cell of level less than $q-1$. In this
case the cell represents both bits and we increase its level to
the correct residue modulo $4$ according to the new value of the
variable vector.\\
\end{enumerate}
 \textbf{Decoding:} The equivalent cases are considered:
\begin{enumerate}
    \item There are at least two cells of level less than $q-1$.
Let $i_1(i_2)$ be the index of the leftmost (rightmost) cell
having level less than $q-1$. Then, we decode
$$v_1 = x_{i_1}(\bmod~2),\ \ v_2 = x_{i_2}(\bmod~2).$$

    \item There is one cell of level less than $q-1$, and it is
the $i$-th cell. Then we decode $$v_1 = \left\lfloor
(x_{i}(\bmod~4))/2\right\rfloor, v_2 =
(x_{i}(\bmod~4))(\bmod~2).$$

    \item All cells have level $q-1$. We decode according to the
previous rule with $x_i=q-1$.
\end{enumerate}

For even values of $q$, the construction is very similar. Again,
the last available cell represents two bits. However, now the
value of the two bits, given by the last available cell, is their
relative difference from the value of the two bits given by all
other cells. We note that since $q$ is even, a cell of level $q-1$
represents a bit of value $1$ and not $0$ as we had in the case of
odd $q$. Furthermore, if we use the last available cell up to
level $q-1$ then it will be impossible to distinguish which cell
represents two bits in case all of them are at level $q-1$.
Therefore, we use the last available cell only until level $q-2$.
This construction is optimal as well.
\begin{theorem}
If there are $n$ $q$-level cells, then the code described above
guarantees at least
$t=(n-1)(q-1)+\left\lfloor\frac{q-1}{2}\right\rfloor$ writes
before erasing.
\end{theorem}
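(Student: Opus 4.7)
The plan is to track the cell-state weight $W=\sum_{i=1}^{n} x_i$, which starts at $0$ and cannot exceed $n(q-1)$, and to lower-bound the number of writes the encoder processes before $W$ saturates. I would split the write history into Phase~I (the ``two-cell'' regime) during which at least two cells remain below $q-1$, and Phase~II (the ``one-cell'' regime) during which exactly one cell remains below $q-1$. In Phase~I every write is a plain $+1$ to either the leftmost or the rightmost non-max cell, contributing exactly $1$ to $W$; the sole exception is the transition write that takes the system from two active cells to one, at which moment the remaining non-max cell is bumped up by some $\mathrm{adj}\in\{0,1,2,3\}$ so that its value modulo~$4$ matches the current encoding of $(v_1,v_2)$. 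A short case analysis based on the Phase~I invariant (the leftmost/rightmost non-max cell has parity $v_1$/$v_2$) and on whether the transition is triggered by a $v_1$ or $v_2$ write determines the pair (starting Phase~II residue, $\mathrm{adj}$) from $y_0'\bmod 4$, where $y_0'$ is the value of the remaining non-max cell just before the adjustment.

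For Phase~II I would use the observation that a $v_1$ flip requires an increment $\equiv 2\pmod{4}$ and hence costs exactly~$2$, while a $v_2$ flip requires an increment $\equiv \pm 1\pmod{4}$ and costs $1$ when the current $v_2=0$ and $3$ when $v_2=1$. Because successive $v_2$ flips must alternate in direction, the cumulative cost of any $b$ such flips differs from $2b$ by at most~$1$; combined with the uniform $v_1$ cost of~$2$ this yields the key estimate that the adversary can force the total cost of any $T$ Phase~II writes to be at most $2T+1$. Consequently, when Phase~II begins with the single cell at value $y_0=y_0'+\mathrm{adj}$, the encoder is guaranteed at least $\lfloor(q-2-y_0)/2\rfloor$ further writes when $v_2=1$ on entry and at least $\lfloor(q-1-y_0)/2\rfloor$ otherwise.

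Combining the two phases gives at least $(n-1)(q-1)+y_0'+T^{\star}$ guaranteed writes, with $T^{\star}$ the Phase~II bound above. I would then finish by a short case check over $y_0'\bmod 4$ and the type of transition, verifying that every combination evaluates to at least $(n-1)(q-1)+\lfloor(q-1)/2\rfloor$; tabulating the four values of $y_0'\bmod 4$ for each type of transition shows that the terms $y_0'$ and $T^{\star}$ conspire to produce the same minimum in all cases. The main obstacles are keeping this final case check tidy and separately handling the boundary case in which the required adjustment would push the remaining cell past $q-1$: in that case the transition write itself fails, but one can verify directly that the $(n-1)(q-1)-1+y_0'$ writes already completed in Phase~I already exceed the target bound whenever $y_0'$ is large enough to force such an overshoot. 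The same argument, modified to account for the rule that for even $q$ the last cell is driven only up to level $q-2$ and the decoding uses relative differences, handles the even-$q$ variant.
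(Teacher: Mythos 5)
Your proposal follows the same strategy as the paper's own proof: decompose into a phase where every write adds exactly one to the cell-state weight (lasting at least $(n-1)(q-1)$ writes, plus $j=y_0'$ extra), then bound the cost of writes to the single remaining cell by at most $2m+1$ using the $v_1$-costs-$2$, $v_2$-alternates-$1$/$3$ observation, and finish with a case check over the residue of the last-cell level modulo $4$. The two small places where you are more careful than the paper --- explicitly distinguishing the $\lfloor(q-1-y_0)/2\rfloor$ versus $\lfloor(q-2-y_0)/2\rfloor$ Phase~II bound according to the $v_2$ entry value, and explicitly handling the overshoot case where the transition adjustment would push the cell past $q-1$ --- are exactly the two points where the paper's exposition is slightly loose (it states the weaker-looking $\lfloor(q-1-x)/2\rfloor$ up front but then uses the correct entry-dependent quantities in each case, and it never mentions overshoot, which happens to come out right numerically because the ``negative writes'' correction $\lfloor(q-1-x)/2\rfloor$ equals $-1$ whenever overshoot can occur). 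So your blueprint is sound and, if anything, tidies up a couple of unstated side conditions in the original.
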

\begin{proof}
As long as there is more than one cell of level less than $q-1$,
the cell state weight increases by one after each write. This may
change only after at least $(n-1)(q-1)$ writes. Let us assume that
there is only one available cell of level less than $q-1$ after
$s=(n-1)(q-1)+j$ writes, where $j\geq 0$. Starting at this write,
the different residues modulo $4$ of this cell correspond to the
four possible variable vectors. Therefore, at the $s$-th write, we
also need to increase the level of the last available cell so it
will correspond to the variable vector at the $s$-th write. For
all succeeding writes, if we change the first bit then the cell
level increases by two. If however we change the second bit then
the increase in the cell level alternates between one and three.
Hence, if there are $m$ writes to the last available cell, then
the cell level increases by at most $2m+1$. Consider the case
where the last available cell starts representing the two bits
together. If its level, before it starts representing the two bits
together and after updating its level to correspond to the
variable vector at the $s$-th write, is $x$, then there are at
least $\left\lfloor(q-1-x)/2\right\rfloor$ more writes.
Next, we consider all possible options for the values of
$j$ and the variable vector at the $s$-th write in order to
calculate the number of guaranteed writes before erasing.
\begin{enumerate}
    \item Suppose $j(\bmod~4) = 0$, the value of both bits is $0$,
and the level of the last available cell does not increase at the
$s$-th write. Hence, there are at least
$\left\lfloor(q-1-j)/2\right\rfloor$ more writes and a total of at
least $(n-1)(q-1) + j + \left\lfloor(q-1-j)/2\right\rfloor \geq
(n-1)(q-1) + \left\lfloor(q-1)/2\right\rfloor$ writes.

    \item Suppose $j(\bmod~4) = 1$, one of the bits has value $1$
and the other one $0$. If the variable vector is $(v_1,v_2) =
(0,1)$ then at the $s$-th write the level of the last available
cell does not increase and if it is $(v_1,v_2) = (1,0)$ then its
level increases by one. There are at least
$\left\lfloor(q-1-(j+1))/2\right\rfloor$ more writes, where $j\geq
1$ and a total of at least $(n-1)(q-1) + j +
\left\lfloor(q-2-j)/2\right\rfloor \geq (n-1)(q-1) +
\left\lfloor(q-1)/2\right\rfloor$ writes.

    \item Suppose $j(\bmod~4) = 2$, the value of both bits is $0$
and we increase the level of the last available cell by two at the
$s$-th write. Therefore, there are at least
$\left\lfloor(q-1-(j+2))/2\right\rfloor$ more writes, where $j\geq
2$ and a total of at least $(n-1)(q-1) + j +
\left\lfloor(q-3-j)/2\right\rfloor \geq (n-1)(q-1) +
\left\lfloor(q-1)/2\right\rfloor$ writes.

    \item Suppose $j(\bmod~4) = 3$, one of the bits has value $1$
and the other one $0$. If the variable vector is $(v_1,v_2) =
(0,1)$ then the level of the last available cell increases by two,
and if it is $(v_1,v_2) = (1,0)$ then we increase by three the
level of the last available cell at the $s$-th write. Thus, there
are at least $\left\lfloor(q-1-(j+3))/2\right\rfloor$ more writes,
where $j\geq 3$ and a total of at least $(n-1)(q-1) + j +
\left\lfloor(q-4-j)/2\right\rfloor \geq (n-1)(q-1) +
\left\lfloor(q-1)/2\right\rfloor$ writes.
\end{enumerate}
In any case, the guaranteed number of writes is
$(n-1)(q-1)+\left\lfloor\frac{q-1}{2}\right\rfloor$.
\end{proof}

\section{Basic Multidimensional Construction}\label{sec:basic muldimensional construction}
In this section we start the discussion of how to store an
arbitrary number of bits. We demonstrate a basic construction for
representing the bits inside a multidimensional box. The main
drawback of this construction is its relatively high write
deficiency that depends on the number of cells. In the next
section we will show an alternative construction with a better
deficiency.

Assume we want to store four bits using $n$ $q$-level cells. We
represent the memory as a matrix of $n_1\times n_2$ cells, where
$n_1n_2=n$. In each column two bits are stored. The first and
second bits are stored using the left columns. The leftmost column
is used first, then the second leftmost and so on. Similarly, the
third and fourth bits are stored using the right columns
right-to-left. In each column we store the bits from the opposite
directions as in the previous two-bits construction. However, in
this case we don't use the last available cell to represent two
bits, but leave it as a separation cell. Assume we change the
value of one of the first two bits, if it is possible to update
this change in the current column that represents these bits, we
do so. Otherwise, and if there is at least one more column for
separation, we use the next column. An example of the memory state
of this construction is demonstrated in Figure~\ref{fig:basic
construction example}. The worst case scenario for the number of
writes before erasing occurs when:
\begin{enumerate}
    \item One column is used for separation.

    \item Another column is only partially used and represents one
write operation. That is, there was only one write to this column
and still it is impossible to update the current write in this
column.

    \item In two other previous columns there is one more cell
that is also only partially used and represents one write
operation.
\end{enumerate}
Hence, the guaranteed number of writes is $(n_1-1)(n_2-1)(q-1) -
\left((n_1-1)(q-1)-1\right) - 2\left((q-1)-1\right)$, where
$n_1n_2=n$. Another example of the memory state that corresponds
to the worst case scenario is given in Figure~\ref{fig:basic
construction worst case}.
\begin{figure}
\begin{center}
\subfigure[]{ \setlength{\unitlength}{2.5cm}
\includegraphics[totalheight=0.26\textheight]{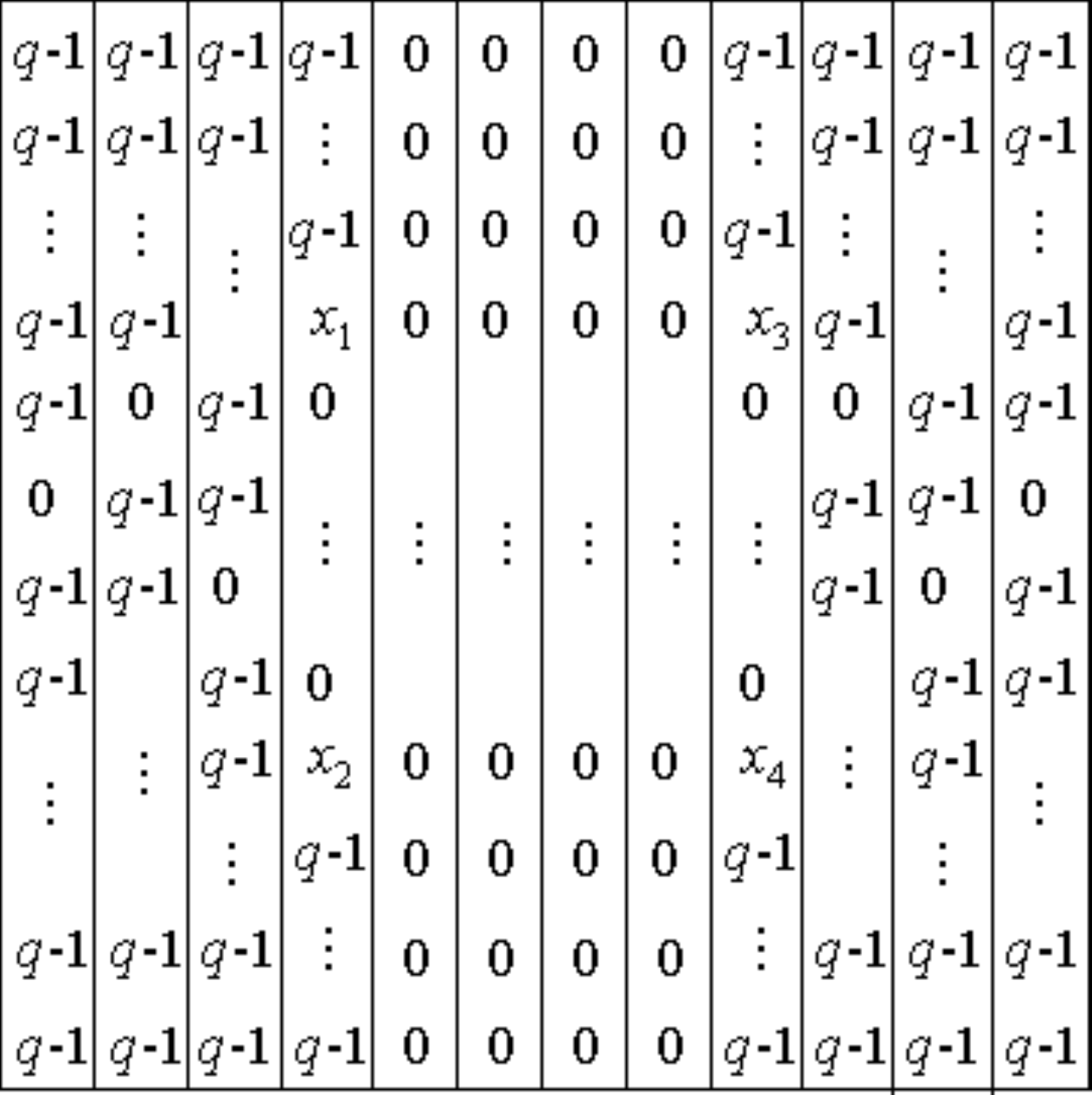}
\label{fig:basic construction example}}
\subfigure[]{\setlength{\unitlength}{2.5cm}
\includegraphics[totalheight=0.26\textheight]{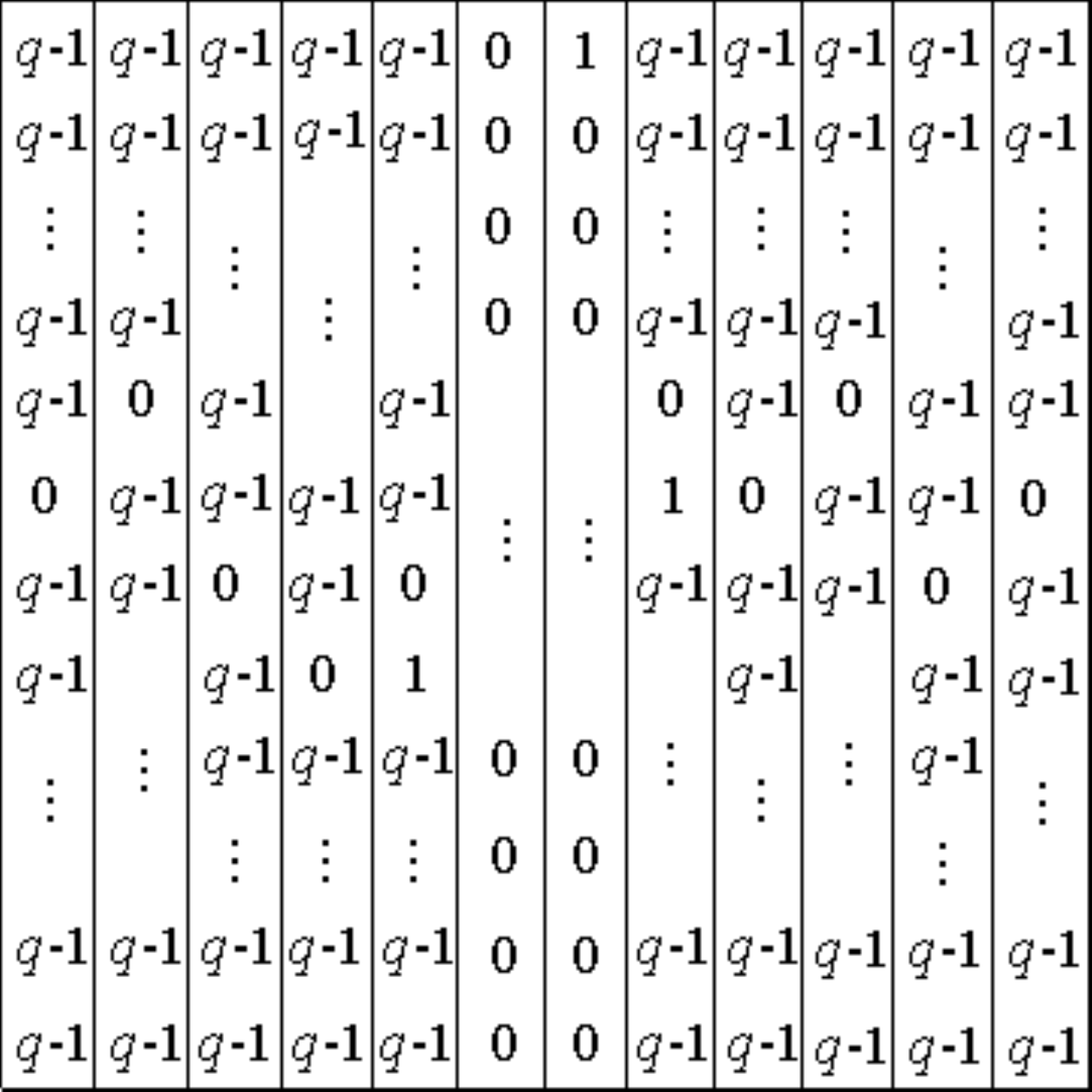}
\label{fig:basic construction worst case}}
\caption{Figure~\ref{fig:basic construction example} describes an
example of the memory state for the basic multidimensional
construction, and Figure \ref{fig:basic construction worst case}
demonstrates an example for the worst case scenario of the
guaranteed number of writes when it is impossible to write the
first bit.} \vspace{-0.3in}
\end{center}
\end{figure}

The generalization of this construction to three dimensions
supports the storage of up to eight bits. Each plane stores four
bits. The lower planes represent the first four bits and the upper
planes represent the last four bits. In each plane we use the
previous construction in order to represent four bits. We can use
all the columns in each plane except for one that is left for
separation between the two groups of two bits in this plane. Also,
one more plane is used for separation between the two groups of
four bits. The equivalent worst case scenario for the number of
writes before erasing occurs as follows:
\begin{enumerate}
    \item One plane is used for separation.

    \item Another plane is partially used and represents only one
write operation.

    \item In two previous planes there is one more column that
represents only one write and two more cells that represent only
one write as well.
\end{enumerate}
Therefore, the guaranteed number of writes is
$(n_1-1)(n_2-1)(n_3-1)(q-1) - \left((n_1-1)(n_2-1)(q-1)-1\right) -
2\left((n_1-1)(q-1)-1\right)-4((q-1)-1)$, where $n_1n_2n_3=n$.

In general, using a $D$-dimensional box we can store $2^D$ bits,
and we can show that the number of guaranteed writes is
\begin{align*}
& \prod_{i=1}^{D}(n_i-1)(q-1)-\sum_{i=1}^{D-1}2^{i-1}\left(\prod_{j=1}^{D-i}(n_j-1)(q-1)-1\right)\\
& - 2^{D-1}\left((q-1)-1\right),
\end{align*}
where $n_1n_2\cdots n_D=n$. It is also possible to show that this
code is asymptotically optimal. However, its deficiency depends on
$n$ and therefore is not close enough to the lower bound on the
deficiency. Next, we will show how to modify this construction in
order to obtain a deficiency that is only dependent on the number
of bits $k$ and the number of cell levels $q$.

\section{Enhanced Multidimensional Construction}\label{sec:enhanced multidimensional construction}
The last construction of flash codes demonstrates the idea of how
to use a multidimensional box in order to represent multiple bits.
Even though its asymptotic behavior is optimal, there is still a
large gap between its write deficiency and the lower bound on the
deficiency. The high deficiency mainly results from the separation
cell in each column, the separation column in each plane, and in
general from the separation hyperplane in each dimension.

The following construction of flash codes shows how to improve the
deficiency. In order to reduce the deficiency from the extra
separation hyperplane in the last dimension, the length of each
dimension, besides the last one, should be as small as possible,
for example we want to choose $n_i = 2,3$ for $1\leq i\leq D-1$.
We also want to store the bits differently so that in each
dimension it is possible to take advantage of all cell levels
before using the next dimension. Another advantage of using small
dimension lengths is that the rate of the code (defined as
$k/(n\log_2\!q)$\,) is enhanced as well. We show a construction
where each dimension length, other than the last one, is two. If
we want to store $k=2^D$ binary bits then we show how to store
$2^{D-1}$ of them inside $(D-1)$-dimensional boxes of size
$n_1\times n_2 \times \cdots \times n_{D-1} = 2\times 2\times
\cdots \times 2$. Then, we use a $D$-dimensional box of size
$n_1\times n_2 \times \cdots \times n_D= 2\times 2\times \cdots
\times 2 \times n_{D}$, where $n_{D} \geq 3$, which consists of
$n_{D}$ $(D-1)$-dimensional boxes. For convenience, we call every
multidimensional box, of any dimension, whose edges are of length
$2$, a \emph{block}.

The construction is recursive. We first present how to store two
bits using two-cell blocks. Then, we use this construction in
order to store four bits using two-dimensional four-cell blocks of
size $2\times 2$. Using the four-bits construction, it is possible
to store eight bits in a three-dimensional eight-cell blocks of
size $2\times 2\times 2$. In general, the construction for storing
$2^{i-1}$ bits in $(i-1)$-dimensional blocks, where $i\geq 3$, is
utilized in order to store $2^{i}$ bits in $i$-dimensional blocks
of $2^{i}$ cells each. We show in detail the basic constructions
for storing two and four bits as these constructions are the
building blocks for the arbitrary recursive construction. An
analysis of the construction deficiency is given as well and it is
shown that the write deficiency order is $O(k^2q)$.

Like the two-bits construction, this construction is presented for
odd values of $q$, and it is possible to modify it in order to
support even values. However, the deficiency is larger when $q$ is
even.
\subsection{Two-Bits Construction}
Our point of departure for these codes is a basic construction for storing two bits using blocks of two cells.\\
\textbf{Encoding:}
\begin{enumerate}
    \item As long as the number of writes in the block is no
greater than $q-1$:
    \begin{enumerate}
        \item If the first bit is changed then the left cell is
raised by one.

        \item If the second bit is changed then the right cell is
raised by one.
    \end{enumerate}

    \item Starting at the $q$-th write, it may happen that for
some cell state vectors of the block, it is possible to write only
one of the bits. In this case, if the other bit is changed then a
new block is used.
    \begin{enumerate}
        \item If the cell state vector of the block is of the form
$(q-1,x)$, where $x< q-1$, then only the first bit can be written
to this block, and the level of the second cell is raised by one.

        \item If the cell state vector of the block is of the form
$(x,q-1)$, where $x< q-1$, then only the second bit can be written
at the next write, and the level of the first cell is raised by
one.

        \item If the cell state vector of the block is of the form
        $(x_1,x_2)$, where $x_1<q-1, x_2<q-1$, then both bits can
        be written at the next step. If the first (second) bit is
        changed then the second (first) cell is raised by one.
    \end{enumerate}
\end{enumerate}
\textbf{Decoding:}
\begin{enumerate}
    \item If the cell state vector is of the form $(x_1,x_2)$,
where $0\leq x_1,x_2\leq q-1$, and $x_1+x_2\leq q-1$ then the
variable vector is $$(v_1,v_2) = (x_1(\bmod~2),x_2(\bmod~2)).$$

    \item If the cell state vector is of the form $(x_1,x_2)$,
where $0\leq x_1,x_2\leq q-1$, and $x_1+x_2> q-1$ then the
variable vector is $$(v_1,v_2) = (x_2(\bmod~2),x_1(\bmod~2)).$$
\end{enumerate}
An example of this construction for $q=5$ is given in
Figure~\ref{fig:2 vars 2 cells}.
\begin{figure}
\begin{center}
\setlength{\unitlength}{2.5cm}
\includegraphics[totalheight=0.27\textheight]{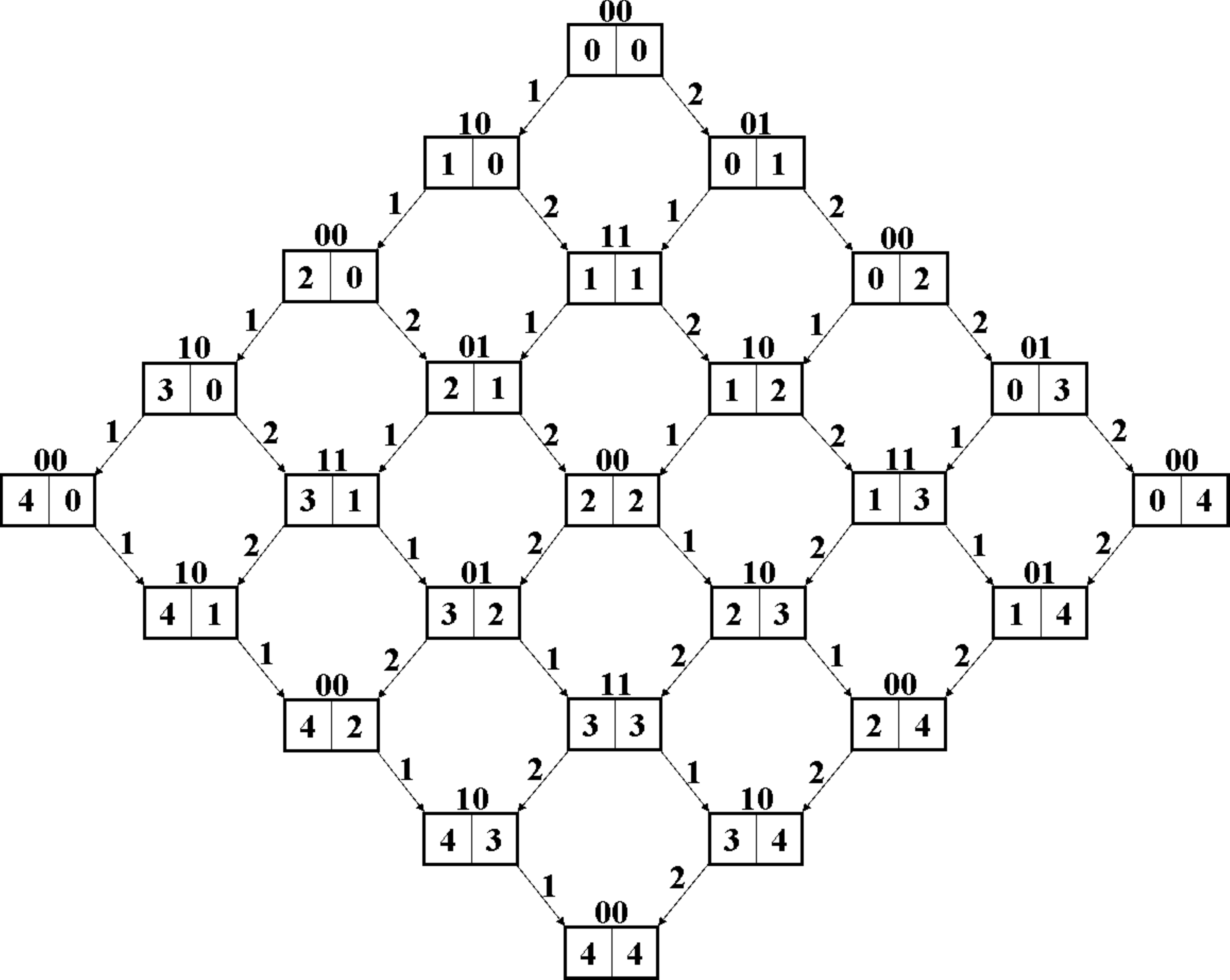}
\caption{State diagram for flash codes storing two bits in a block
of two $\textrm{5}$-level cells. The numbers in each block, above
the block, and next to each edge represent the cell state vector,
variable vector, and the written bit, respectively.} \label{fig:2
vars 2 cells}
\end{center}
\end{figure}
We note that as long as the number of writes is no greater than
$q-1$ then it is possible to write both bits. Only at the $q$-th
write might it happen that writing will continue in the next
block.

By abuse of terminology we use the following definitions for a
block of any size:
\begin{enumerate}
    \item A block is called \emph{empty} if all its cells are at
level zero.

    \item A block is called \emph{full} if all its cells are at
level $q-1$.

    \item A block is called \emph{active} if it is neither empty
nor full.
\end{enumerate}
\begin{lemma}\label{lem:2 cells blocks loss}
For the two-bits construction, at any write operation, there are
at most two active blocks and at most $A_1 = (q-1)+2(q-1)-1 =
3(q-1)-1$ levels that are not used in these two blocks.
\end{lemma}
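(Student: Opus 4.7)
The plan is to prove the lemma in two stages: first, that at most two blocks are active at any moment, and second, that the total number of unused levels across those blocks is at most $A_1 = 3(q-1)-1$.

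For the first stage, the natural approach is an induction on writes, maintaining the invariant that the currently active blocks form a contiguous suffix of the sequence $B_1,B_2,\ldots$ of blocks, of length at most two. Blocks are opened in order, and $B_{i+1}$ is opened precisely when $B_i$ first reaches a stuck extended state---a cell state vector of the form $(q-1,x)$ or $(x,q-1)$ with $x<q-1$---and the bit that $B_i$ refuses is changed. After $B_{i+1}$ is opened, $B_i$ continues to absorb writes of the bit it can still handle, while $B_{i+1}$ receives writes of the other bit. The key observation is that as long as $B_i$ remains active, $B_{i+1}$ receives writes of only one bit, and the construction rules drive it along the deterministic trajectory $(0,0)\to(0,1)\to\cdots\to(0,q-1)\to(1,q-1)\to\cdots\to(q-1,q-1)$; at every intermediate state in this trajectory that same bit remains writable. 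Hence $B_{i+1}$ cannot itself trigger the opening of a further block $B_{i+2}$ until $B_i$ has filled, at which point $B_i$ is no longer active, and only $B_{i+1}$ together with a possibly newly opened $B_{i+2}$ remain active.

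For the second stage, I would case-split on the number of active blocks at the given moment. With zero or one active blocks the bound is immediate, since a single active block contributes at most $2(q-1)-1<A_1$ unused levels. With two active blocks, call the older one $B_i$ and the newer one $B_{i+1}$. The structural analysis above places $B_i$ in a stuck extended state, which has cell state weight at least $q-1$ and therefore at most $q-1$ unused levels. The newer block $B_{i+1}$ contains at least one write (otherwise it would be empty and not active), so its weight is at least one and its unused count is at most $2(q-1)-1$. Summing the two contributions yields $(q-1)+(2(q-1)-1)=3(q-1)-1=A_1$ as required.

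The main obstacle lies in justifying the structural invariant of the first stage: that whenever two blocks are simultaneously active, the older one must be in the stuck extended phase. This reduces to verifying that a block receiving writes of only one bit never passes through a state in which that bit is unwritable, so such a block must fill rather than trigger another successor before becoming inactive. Once this is secured, the counting of unused levels in the second stage is a direct consequence of the characterization of the two active blocks' states.
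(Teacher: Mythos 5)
Your proposal follows essentially the same approach as the paper's proof, but supplies the detail the paper leaves implicit. The paper's argument is quite terse: it simply asserts that when a new block is opened, the previous (stuck) block has used at least $q-1$ levels and that there are no other active blocks, without explaining why a block that is fed writes of only one bit must eventually fill rather than get stuck. Your trajectory observation --- that such a block marches through $(0,0)\to(0,1)\to\cdots\to(0,q-1)\to(1,q-1)\to\cdots\to(q-1,q-1)$ and the relevant bit stays writable at every intermediate state --- is exactly the missing justification for that claim, and the subsequent accounting ($\leq q-1$ unused levels in the stuck block, $\leq 2(q-1)-1$ in the active block with weight at least one) matches the paper's computation.

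One small imprecision worth fixing: the invariant that ``the currently active blocks form a contiguous suffix of length at most two,'' together with the assertion that $B_{i+1}$ is opened \emph{precisely} when $B_i$ first reaches a stuck state, is not quite right. If $B_i$ is stuck and the other bit is written repeatedly, $B_{i+1}$ can fill completely along the happy trajectory without ever being stuck, and $B_{i+2}$ is then opened while $B_i$ is still active. In that case the active blocks are $B_i$ and $B_{i+2}$ --- not contiguous --- and $B_{i+2}$ was opened because its predecessor filled, not because it got stuck. This does not break your bound, since the older active block is still of stuck form with weight at least $q-1$ and the newer has weight at least one, but the invariant should be stated as ``at most two active blocks, the older (if any) being in a stuck-form state,'' dropping the adjacency and the ``precisely when'' characterization. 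The paper's own one-line proof is equally loose on this point.
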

\begin{proof}
If a new block is used then the previous block uses at least $q-1$
levels, and there are no more active blocks. Therefore, at most
$q-1$ levels are not used in the previous block and $2(q-1)-1$ in
the new block.
\end{proof}

\subsection{Four-Bits Construction}
Next, the four-bits case is considered. We use blocks of four
cells. Each block is divided into two sub-blocks of two cells
each, and each such sub-block is a column that stores two bits
according to the previous construction. The block can either store
the first and second bits together or the third and fourth bits
together, i.e., it is impossible to store all four bits together
in the same block. If the block stores the first and second bits
then the sub-blocks are written left-to-right, and if the block
stores the third and fourth bits then the sub-blocks are written
right-to-left.
$$ 1,2\rightarrow
\begin{tabular}{|c||c|}
  \hline
  * & * \\
  * & * \\
  \hline
\end{tabular}\leftarrow 3,4.$$
In this construction we have another distinction between storing
the first and second bits and the third and fourth bits. Each
sub-block represents two bits according to the previous
construction, but the order the bits are written in the sub-blocks
is changed. If the block represents the first and second bits then
both sub-blocks represent the bits in the same way:
$$\begin{tabular}{|c||c|}
  \hline
  1 & 1 \\
  2 & 2 \\
  \hline
\end{tabular}$$
However, if the block represents the third and fourth bits then
the representation order of the bits in the two sub-blocks is
changed as follows:
$$ \begin{tabular}{|c||c|}
  \hline
  4 & 3 \\
  3 & 4 \\
  \hline
\end{tabular}$$
\textbf{Encoding:}
\begin{enumerate}
    \item The block can either represent the bits $1,2$ or the
bits $3,4$. In each sub-block (column) we use the previous
construction in order to represent two bits.

    \item If the block represents the bits $1,2$, then we write
the two sub-blocks left-to-right, and the bits are stored
similarly in the two sub-blocks.

    \item If the block represents the bits $3,4$, then we write
    the two sub-blocks right-to-left. For the right sub-block we
    represent the two bits where the third (fourth) bit is
    considered to be the first (second) bit in the two bits
    construction, $$3\leftrightarrow 1, 4\leftrightarrow 2.$$
    However, for the left sub-block we change the order of the
    bits, i.e., the third (fourth) bit is considered to be the
    second (first) bit in the two-bits construction,
    $$3\leftrightarrow 2, 4\leftrightarrow 1.$$

    \item If the block represents the bits $1,2$ ($3,4$) then the
right (left) sub-block cannot be full before the left (right)
sub-block is full.
\end{enumerate}

\textbf{Decoding:}
\begin{enumerate}
    \item For every active block we first determine, according to
the encoding rules, whether it represents the bits $1,2$ or $3,4$:
    \begin{enumerate}
        \item If the right (left) sub-block is empty then the
block represents the bits $1,2$ ($3,4$).

        \item If the left (right) sub-block is full then the block
represents the bits $1,2$ ($3,4$).

        \item If both sub-blocks are active then according to the
decoding procedure of the two-bits construction we can decide for
each sub-block whether it is in a state that enables to write both
bits, only the first bit or only the second bit.
            \begin{enumerate}
                \item If the right (left) sub-block is in a state
that enables to write both bits then the block represents the bits
$1,2$ ($3,4$).

                \item Assume the states of both sub-blocks
enable to write only one bit. If both bits from the two sub-blocks
are first bits or both bits are second bits then the block
represents the bits $3,4$. Otherwise, it represents the bits
$1,2$.
            \end{enumerate}
    \end{enumerate}

    \item If the block represents the bits $1,2$ or $3,4$, we can
decode their values from the two sub-blocks using the decoding
procedure of the two-bits construction.

    \item The value of each bit is the XOR of its values from all
blocks.
\end{enumerate}
An example of this construction for $q=3$ is given in
Figure~\ref{fig:4 vars 4 cells}.
\begin{figure}
\begin{center}
\setlength{\unitlength}{2.5cm}
\includegraphics[totalheight=0.26\textheight]{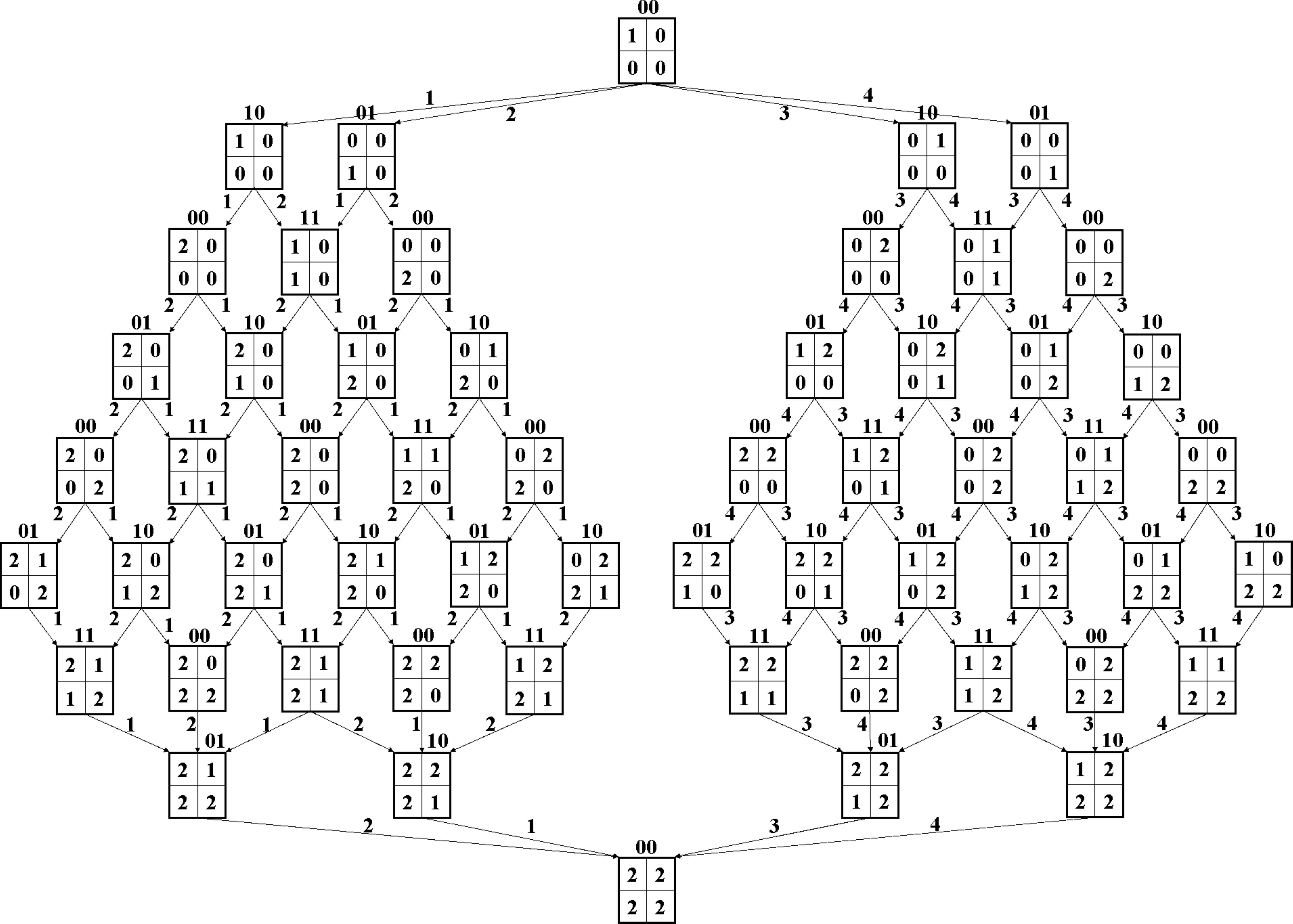}
\caption{State diagram for flash codes storing four bits in a
four-cell block. The numbers in each block, above the block, and
next to each edge represent the cell state vector, variable
vector, and the written bit, respectively} \label{fig:4 vars 4
cells}
\end{center}
\vspace{-0.1in}
\end{figure}
\begin{lemma}\label{lem:4 cells blocks loss}
For the four-bits construction, after any write operation, there
are at most six active blocks and at most $A_2 =
2\left((q-1)+1+4(q-1)-1\right) = 10(q-1)$ levels that are not used
in these four blocks.
\end{lemma}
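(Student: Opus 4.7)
The plan is to reduce Lemma~\ref{lem:4 cells blocks loss} to Lemma~\ref{lem:2 cells blocks loss} by exploiting the recursive structure of the four-bits construction. First I would split the family of $4$-cell blocks according to the pair of bits each block stores: $(1,2)$-blocks and $(3,4)$-blocks. Writes to bit $1$ or $2$ only modify a $(1,2)$-block and writes to bit $3$ or $4$ only modify a $(3,4)$-block, so the two families evolve independently, and it is enough to prove that each family contains at most three active $4$-cell blocks whose combined unused levels are at most $(q-1)+1+4(q-1)-1=5(q-1)$; doubling then yields the lemma.

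Restricting to the $(1,2)$-family, each $4$-cell block is composed of two two-cell sub-blocks, and each sub-block is itself an instance of the two-bits construction applied to $(v_1,v_2)$. Encoding rule~4 forces the left sub-block of every $(1,2)$-block to be completed before the right one is, so the sub-blocks of the family form a single linearly ordered sequence and the ``start a new block'' mechanism of the two-bits construction transfers verbatim to this sequence. Applying Lemma~\ref{lem:2 cells blocks loss} to the sub-block sequence yields at most two active sub-blocks whose combined unused levels do not exceed $A_1 = (q-1) + (2(q-1)-1)$, split as $(q-1)$ from a previously started sub-block that has become stuck and $2(q-1)-1$ from a newly started sub-block.

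The remaining step, and the main technical obstacle, is translating the sub-block counts into $4$-cell-block counts. The two active sub-blocks can span up to two distinct $4$-cell blocks, but a third $4$-cell block can also be active in a transitional state, with its two sub-blocks individually either full or empty and hence not appearing in the sub-block active count. The worst configuration one should verify is: an older $4$-cell block contributing up to $q-1$ unused levels via its stuck sub-block (the partner sub-block being full); a newly started $4$-cell block contributing up to $4(q-1)-1$ unused levels, namely $2(q-1)-1$ from its recently touched sub-block plus $2(q-1)$ from its still-empty partner sub-block; and a transitional $4$-cell block between them which, by a careful case analysis of how the sub-block sequence crosses a $4$-cell block boundary, contributes at most $1$ extra unused level. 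Adding these three pieces gives $5(q-1)$ per family, and doubling over $(1,2)$ and $(3,4)$ yields at most six active $4$-cell blocks with $A_2 = 10(q-1)$ unused levels. The delicate part is isolating the $+1$ contribution from the transitional block and ruling out configurations with a fourth simultaneously active $4$-cell block of one type; the rest of the accounting follows either from Lemma~\ref{lem:2 cells blocks loss} or from direct inspection of the encoding rules.
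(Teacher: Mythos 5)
Your high-level strategy (split into the $(1,2)$-family and $(3,4)$-family, show that each family has at most three active $4$-cell blocks with at most $5(q-1)$ unused levels, then double) is the same as the paper's. Where you diverge is in trying to derive the per-family bound by reducing the sub-block sequence to Lemma~\ref{lem:2 cells blocks loss}, and this reduction has a gap.

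Concretely, Lemma~\ref{lem:2 cells blocks loss} gives you \emph{at most two} active sub-blocks and at most $3(q-1)-1$ unused sub-block levels, and your translation step then seeks a third ``transitional'' $4$-cell block whose sub-blocks are each either full or empty. But the paper's own worst-case configuration for the $(1,2)$-family does not look like that. It consists of three $4$-cell blocks each contributing \emph{one} active sub-block: block one with left sub-block full and right sub-block at $(q-2,q-1)$ ($1$ unused level), block two with left sub-block full and right sub-block at $(0,q-1)$ ($q-1$ unused levels), and a freshly started block three with an active left sub-block and an empty right sub-block ($4(q-1)-1$ unused levels). That is \emph{three} active sub-blocks with $1+(q-1)+(2(q-1)-1)=3(q-1)$ unused levels among them, strictly exceeding the $A_1 = 3(q-1)-1$ bound you import from Lemma~\ref{lem:2 cells blocks loss}. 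The reason the import fails is that the sub-block sequence in the four-bits construction is not literally an instance of the two-bits construction: encoding rule~4 and the block boundary force a jump to a fresh block's left sub-block before the stuck right sub-block of the previous block can ever be filled, which breaks the invariant that Lemma~\ref{lem:2 cells blocks loss}'s proof relies on (that when a new block starts, only one earlier block remains non-full). In short, ``the start-a-new-block mechanism transfers verbatim'' is the step that does not hold, and it is also why the paper does not attempt this reduction and instead argues directly that there can be up to two previously-abandoned blocks plus one new block per family, accounting the unused levels as $1 + (q-1) + (4(q-1)-1) = 5(q-1)$. Your final total happens to come out right, but the intermediate bookkeeping does not cover the actual extremal configuration, so the argument as written is not sound.
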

\begin{proof}
For each pair of bits there are at most three active blocks, a new
block and two previous active blocks that could not be full before
starting the new block. In the two previous blocks at most
$(q-1)+1$ levels are not used, corresponding to the case that the
first sub-block does not use one level and the second block does
not use $(q-1)$ levels. For example, if the blocks represent the
first and second bits, the two previous active blocks can be of
the following form, that enables to write only the second bit:
$$ \begin{tabular}{|c|c|}
  \hline
  $q-1$ & $q-2$ \\
  $q-1$ & $q-1$ \\
  \hline
\end{tabular}\ \
\begin{tabular}{|c|c|}
  \hline
  $q-1$ & $0$ \\
  $q-1$ & $q-1$ \\
  \hline
\end{tabular}$$ In the new block at most $4(q-1)-1$ levels are not used.
Therefore, there are at most six active blocks and at most $A_2 =
2\cdot((q-1)+1 + 4(q-1)-1) = 10(q-1)$ levels that are not used in
these blocks.
\end{proof}

\subsection{Construction for Arbitrary Number of Bits}
We are now ready to present the general construction of flash
codes storing an arbitrary number of bits. First, we briefly
describe how to represent eight bits and then give the general
construction.

In order to store eight bits we use a block of eight cells which
is a three-dimensional box of size $2\times 2 \times 2$. In fact,
the block consists of two sub-blocks of four cells each that can
be considered as two concatenated sub-blocks of size $2\times 2$.
$$\begin{tabular}{|c|c||c|c|}
  \hline
  * & * & * & * \\
  * & * & * & * \\
  \hline
\end{tabular} $$
Each block can either represent the bits $1,\ldots,4$ or
$5,\ldots,8$ according to the following order:
$$ 1,\ldots,4\rightarrow \begin{tabular}{|c|c||c|c|}
    \hline
    \scriptsize {$\underrightarrow{1,2}$} & \scriptsize {$\underleftarrow{3,4}$} & \scriptsize {$\underrightarrow{5,6}$} & \scriptsize {$\underleftarrow{7,8}$} \\
    * & * & * & * \\
    \hline
    \end{tabular} \leftarrow 5,\ldots,8. $$
The bits $1,\ldots,4$ ($5,\ldots,8$) write the two $2\times 2$
sub-blocks left-to-right (right-to-left). Each sub-block
represents four bits according to the four-bits construction. More
rules are used to decide whether the code represents the bits
$1,\ldots,4$ or $5,\ldots,8$, and are described in detail below
for the arbitrary case. 

For representing $2^i$ bits, we assume that there is a
construction for storing $2^{i-1}$ bits in blocks of $2^{i-1}$
cells. We use blocks of $2^i$ cells that consist of two sub-blocks
of $2^{i-1}$ cells. We assume that for the $2^{i-1}$-bits
construction there are at most $3\cdot 2^{i-2}$ active blocks and
at most $A_{i-1}$ levels that are not used in these blocks. \\
\textbf{Encoding:}
\begin{enumerate}
    \item The block can either represent the bits
$1,\ldots,2^{i-1}$ or the bits $(2^{i-1}+1) ,\ldots,2^{i}$
according to the following order:
    \renewcommand{\tabcolsep}{0.09cm}
    $$\hspace{-0.27in} \tiny{\begin{array}{c} 1\ldots \\ 2^{i-1}\end{array}\rightarrow} \begin{tabular}{|c|c||c|c|}
    \hline
    \tiny {$1\ldots$} & \tiny {$(2^{i-2}+1)$} & \tiny {$(2^{i-1}+1)\ldots$} & \tiny {$(2^{i-1}+2^{i-2}$} \\
    \tiny {$\underrightarrow{2^{i-2}}$} & \tiny {$\underleftarrow{\ldots2^{i-1}}$} & \tiny {$\underrightarrow{(2^{i-1}+2^{i-2})}$} & \tiny{$\underleftarrow{+1) \ldots2^{i}}$}  \\
    \hline \end{tabular}\leftarrow \begin{array}{c} (2^{i-1}+1) \\
\cdots2^{i}\end{array}$$
    In each sub-block, $2^{i-1}$ bits are represented according to
the recursive construction.

    \item Assume the block represents the bits $1,\ldots,2^{i-1}$,
    \begin{enumerate}
        \item The sub-blocks are written left-to-right.

        \item It is possible to use the right sub-block only if
        the left sub-block is full.
    \end{enumerate}

     \item Assume the block represents the bits $(2^{i-1}+1),\ldots, 2^{i}$,
    \begin{enumerate}
        \item The sub-blocks are written right-to-left.

        \item It is possible to use the left sub-block only if the
        right sub-block is full.

        \item The $2^{i-1}$ bits $(2^{i-1}+1),\ldots, 2^{i}$ are
        stored as if they were the bits $1,\ldots,2^{i-1}$,
        ~where~the~$(2^{i-1}+j)$~-th bit, $1\leq j \leq 2^{i-1}$
        is considered to be the $j$-th bit.
    \end{enumerate}
 \end{enumerate}
\textbf{Decoding:}
\begin{enumerate}
    \item For every active block we first determine, according to the
encoding rules, which group of $2^{i-1}$ bits it represents. If
the left (right) sub-block is full or active, then the block
represents the bits $1,\ldots, 2^{i-1}$
($(2^{i-1}+1),\ldots,2^{i}$).

    \item If the block represents the bits $1,\ldots, 2^{i-1}$ or
$(2^{i-1}+1),\ldots, 2^{i}$, we can decode their value from the
two sub-blocks using the decoding procedure of the $2^{i-1}$-bits
construction.

    \item The value of each bit is the XOR of its values from all
blocks.
\end{enumerate}
\begin{lemma}\label{lem:2^i cells blocks loss}
For flash codes storing $2^i$ bits, after any write operation,
there are at most $3\cdot 2^{i-1}$ active blocks and at most
$$A_i=2A_{i-1} +3/4\cdot (q-1) 4^i$$ levels that are not used in these blocks.
\end{lemma}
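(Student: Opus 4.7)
The plan is to prove the lemma by induction on $i$, with the base cases $i=1$ and $i=2$ already established in Lemmas~\ref{lem:2 cells blocks loss} and~\ref{lem:4 cells blocks loss}.

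For the inductive step ($i\ge 3$), I would analyze the two groups of $2^{i-1}$ bits separately. Each $2^i$-cell block splits into two $2^{i-1}$-cell sub-blocks and stores exactly one of the groups, with the encoding rule forcing the left sub-block to be full before the right may be used. Viewing the sequence of sub-blocks belonging to a single group as the blocks of the embedded $2^{i-1}$-bits construction, the induction hypothesis yields at most $3\cdot 2^{i-2}$ active sub-blocks per group and at most $A_{i-1}$ unused levels among them.

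Next I would lift this from sub-blocks to $2^i$-cell blocks. The ``left before right'' rule prevents the two sub-blocks of a single $2^i$-cell block from being simultaneously active (if the left is active the right is still empty, and if the right is active the left is already full), so each active sub-block sits in its own $2^i$-cell block. Thus there are at most $3\cdot 2^{i-2}$ active $2^i$-cell blocks per group, and $3\cdot 2^{i-1}$ in total. For the unused-level count, the $A_{i-1}$ levels inside the active sub-blocks contribute $2A_{i-1}$ over the two groups, and inside each active $2^i$-cell block the sub-block paired with the active one is, in the worst case, empty and wastes $2^{i-1}(q-1)$ levels. Summing this over at most $3\cdot 2^{i-2}$ active $2^i$-cell blocks per group gives $3\cdot 2^{i-2}\cdot 2^{i-1}(q-1)=(3/8)(q-1)4^i$ per group, i.e., $(3/4)(q-1)4^i$ overall, yielding exactly $A_i = 2A_{i-1} + (3/4)(q-1)4^i$.

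The main obstacle I anticipate is the transient state in which a $2^i$-cell block has just had its left sub-block filled but its right sub-block not yet started; such a block is active in the outer sense but contributes no active sub-block to the inner count, and a naive bound would give $3\cdot 2^{i-2}+1$ per group. Handling this requires arguing that the write which triggers this transition simultaneously drops the inner active-sub-block count by one (the just-filled left sub-block is no longer active), so that the combined tally of active-or-transitional $2^i$-cell blocks per group remains at most $3\cdot 2^{i-2}$, preserving the total bound $3\cdot 2^{i-1}$.
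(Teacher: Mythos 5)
Your proof takes the same route as the paper: decompose each $2^i$-cell block into two $2^{i-1}$-cell sub-blocks, apply the inductive hypothesis to the embedded $2^{i-1}$-bits construction on the sub-block sequence of each of the two bit-groups, observe that the left-before-right rule forces the two sub-blocks of a given $2^i$-cell block never to be simultaneously active, and charge, for each active sub-block, the (at worst empty) companion sub-block at a cost of $2^{i-1}(q-1)$ levels. The $2A_{i-1} + 2\cdot 3\cdot 2^{i-2}\cdot 2^{i-1}(q-1) = 2A_{i-1} + \frac{3}{4}(q-1)4^i$ tally is exactly the paper's computation with $B_i = \frac{3}{8}(q-1)4^i$ per group. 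The one place you go beyond the paper is the transitional state in which a block's left sub-block has just been filled while its right sub-block is still empty; the paper's proof silently identifies active outer blocks with active sub-blocks and never mentions this case. Your observation is a genuine one: such a block is active at the outer level yet hosts no active sub-block, so a naive count could exceed $3\cdot 2^{i-2}$ per group. Your resolution — that the write producing this state simultaneously retires an active sub-block, so the combined active-plus-transitional count never rises above $3\cdot 2^{i-2}$ per group, and the empty right sub-block of the transitional block is still covered by the same $B_i$ budget — is correct and closes a small gap the paper leaves implicit. One minor bookkeeping note: the recursion $A_i = 2A_{i-1} + \frac{3}{4}(q-1)4^i$ is used starting from the base $A_2$ of Lemma~\ref{lem:4 cells blocks loss}, i.e., for $i\ge 3$; it does not reproduce $A_2$ from $A_1 = 3(q-1)-1$, so the $i=1$ case is not really part of the induction.
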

\begin{proof}
Each of the blocks that represents the bits $1,\ldots,2^{i-1}$ can
be considered as a pair of sub-blocks containing $2^{i-1}$ cells,
such that each sub-block represents the bits $1,\ldots,2^{i-1}$.
According to the recursive construction at most $3\cdot 2^{i-2}$
sub-blocks are active and at most $A_{i-1}$ levels are not used in
these sub-blocks. If all these sub-blocks happen to be the left
ones in their containing blocks then there are $3\cdot 2^{i-2}$
more sub-blocks that are empty, which are the corresponding right
sub-blocks in each block. In each such a sub-block, at most
$2^{i-1}(q-1)$ levels are not used. Hence, in these sub-blocks at
most $$B_i= 3\cdot 2^{i-2}\cdot (q-1)2^{i-1} = 3/8\cdot (q-1)4^i$$
levels are not used. The same analysis is applied to the blocks
that represent the bits $(2^{i-1}+1),\ldots,2^{i}$. Therefore, for
all the bits, there are at most $3\cdot 2^{i-1}$ active blocks and
at most
$$A_i = 2A_{i-1} + 2B_i = 2A_{i-1} +3/4\cdot (q-1)4^i$$ levels that are not used in these blocks.

\end{proof}
\subsection{Deficiency Analysis}
\begin{lemma}
For $i\geq 2$ we have
$$A_i = 3/2\cdot (q-1)4^{i}-7/2\cdot (q-1)2^{i}.$$
\end{lemma}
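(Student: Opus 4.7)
The plan is a straightforward induction on $i \geq 2$, using the recurrence $A_i = 2A_{i-1} + 3/4 \cdot (q-1) 4^i$ from Lemma~\ref{lem:2^i cells blocks loss} and the base value $A_2 = 10(q-1)$ from Lemma~\ref{lem:4 cells blocks loss}.

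For the base case $i=2$, I would simply substitute into the closed form:
\[
3/2 \cdot (q-1) \cdot 4^{2} - 7/2 \cdot (q-1) \cdot 2^{2} = 24(q-1) - 14(q-1) = 10(q-1),
\]
which matches $A_2$ as given by Lemma~\ref{lem:4 cells blocks loss}.

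For the inductive step, assuming $A_{i-1} = 3/2 \cdot (q-1) 4^{i-1} - 7/2 \cdot (q-1) 2^{i-1}$, I would plug into the recurrence:
\[
A_i = 2A_{i-1} + 3/4 \cdot (q-1) 4^i = 3 \cdot (q-1) 4^{i-1} - 7 \cdot (q-1) 2^{i-1} + 3/4 \cdot (q-1) 4^i,
\]
and then collect terms by observing $3 \cdot 4^{i-1} = 3/4 \cdot 4^i$ and $7 \cdot 2^{i-1} = 7/2 \cdot 2^i$, which immediately yields the claimed formula for $A_i$.

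There is no real obstacle here: the recurrence is linear of the form $A_i = 2A_{i-1} + c \cdot 4^i$, whose homogeneous solution is proportional to $2^i$ and whose particular solution is proportional to $4^i$, so the closed form $\alpha \cdot 4^i + \beta \cdot 2^i$ is essentially forced and the constants $\alpha = 3/2 \cdot (q-1)$, $\beta = -7/2 \cdot (q-1)$ are pinned down by matching the particular solution coefficient and the base case. The only thing that actually needs to be checked is the arithmetic, which the induction does automatically.
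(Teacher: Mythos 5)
Your proof is correct and follows essentially the same approach as the paper: an induction on $i$ with base case $A_2 = 10(q-1)$ from Lemma~\ref{lem:4 cells blocks loss} and inductive step via the recurrence from Lemma~\ref{lem:2^i cells blocks loss}. The extra remark about solving the linear recurrence directly (homogeneous $+$ particular) is a nice sanity check but not part of the paper's argument; the induction itself is identical.
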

\begin{proof}
We prove the correctness of the expression for $A_i$ by induction.
According to Lemma~\ref{lem:4 cells blocks loss}, we have $A_2 =
10(q-1)$ which is also given by this expression. Assume
$A_{i-1}=3/2\cdot (q-1)4^{i-1}-7/2\cdot (q-1)2^{i-1}$, for $i\geq
3$, then according to Lemma~\ref{lem:2^i cells blocks loss}
\begin{align*}
& A_i = 2A_{i-1} + 3/4\cdot 4^i(q-1) \\
& = 2  \left( 3/2\cdot (q-1)4^{i-1}-7/2\cdot (q-1)2^{i-1}  \right) + 3/4\cdot 4^i(q-1) \\
& =3/2\cdot (q-1)4^{i}-7/2\cdot (q-1)2^{i}.
\end{align*}
\end{proof}
\begin{theorem}
If the flash codes represent $k=2^D$ bits then the deficiency
$\delta_D$ satisfies
$$\delta_D  = 2A_{D-1} +1 = 3/4\cdot (q-1)k^2-7/2\cdot (q-1)k+1.$$
\end{theorem}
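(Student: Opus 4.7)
The plan is two-fold: first establish the structural equation $\delta_D = 2A_{D-1} + 1$ via the top-level layout together with Lemma~\ref{lem:2^i cells blocks loss}, then deduce the closed form by routine algebra.

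For the structural part, I would begin by recalling the top-level memory layout. The code for $k = 2^D$ bits uses a $D$-dimensional memory box of size $2 \times 2 \times \cdots \times 2 \times n_D$, partitioned naturally along the last dimension into $n_D$ sub-boxes of $2^{D-1}$ cells each; these sub-boxes serve as the ``blocks'' of the recursive $2^{D-1}$-bits construction. The $2^D$ bits are split into two groups of $2^{D-1}$ bits: group~A is encoded into the leftmost sub-boxes under the $2^{D-1}$-bits construction, and group~B is encoded symmetrically from the right. Applying Lemma~\ref{lem:2^i cells blocks loss} with $i = D-1$ to each group separately, at any write at most $A_{D-1}$ cell-levels are unused among that group's active sub-boxes; between the two active regions every sub-box is saturated at level $q-1$. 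Summing the two groups' contributions and accounting for one additional level lost at the boundary between the two regions (playing the role of the separation cell in the Section~\ref{sec:two bits} two-bits construction), the total unused levels at an erase event is at most $2A_{D-1} + 1$. Since every write in the enhanced construction raises exactly one cell by one level, the deficiency equals the unused-level count at failure, yielding $\delta_D = 2A_{D-1} + 1$.

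The closed-form expression then follows by pure arithmetic: substituting $A_{D-1} = \tfrac{3}{2}(q-1)4^{D-1} - \tfrac{7}{2}(q-1)2^{D-1}$ from the preceding lemma and using $k = 2^D$ (so $4^{D-1} = k^2/4$ and $2^{D-1} = k/2$) transforms $2A_{D-1} + 1$ into $\tfrac{3}{4}(q-1)k^2 - \tfrac{7}{2}(q-1)k + 1$.

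The main obstacle is pinning down why the boundary overhead is exactly $+1$ and not, say, a whole separator sub-box. The summands $A_{D-1}$ per group follow cleanly from Lemma~\ref{lem:2^i cells blocks loss}, but justifying that only one cell-level is reserved at the interface — where group~A's rightmost active sub-box meets group~B's leftmost — requires a careful examination of the decoding rules governing the interface, to confirm that no further wasted blocks occur and that a single level suffices to make the group-assignment of each sub-box unambiguous.
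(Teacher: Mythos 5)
Your proof follows essentially the same route as the paper: partition the $D$-dimensional box along the last axis into $n_D$ sub-boxes of $2^{D-1}$ cells, apply Lemma~\ref{lem:2^i cells blocks loss} with $i=D-1$ once per half of the bits to bound the unused levels in each group's active sub-boxes by $A_{D-1}$, add a small separation overhead, and finish with the same algebra on $A_{D-1}$. The one place you diverge from the paper's wording is the source of the ``$+1$'': you attribute it to a single cell-level at the boundary between the two expanding regions (by analogy with the two-bits code of Section~\ref{sec:two bits}), whereas the paper attributes it to the unusable ``last separation block.'' You are right to flag this as the delicate point, because the paper's justification as written is itself not transparent --- if the last separation sub-box must remain untouched at failure, a naive accounting would charge all $2^{D-1}(q-1)$ of its levels rather than $1$, and neither your sketch nor the paper's text supplies the extra argument that reconciles this with the stated constant. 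The paper also states equality $\delta_D = 2A_{D-1}+1$ while the Lemma~\ref{lem:2^i cells blocks loss}-based accounting only yields an upper bound on the unused levels; you likewise do not exhibit a matching worst-case write sequence. These are gaps in the paper that your proposal inherits rather than introduces, and you correctly identify the interface analysis as the missing piece.
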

\begin{proof}
In order to represent $k=2^D$ bits a $D$-dimensional box of size
$2\times 2\times \cdots \times 2\times n_D$ is used. The
multidimensional box is considered~as~an~array~of~$n_D$
$(D-1)$-dimensional boxes, called blocks. The first (last)
$2^{D-1}$ bits are represented using the blocks left-to-right
(right-to-left), and there is one block for separation. In each
$(D-1)$-dimensional box we use the construction to represent
$2^{D-1}$ bits. Writing stops if we need to start using a new
block but it is the last separation block. According to
Lemma~\ref{lem:2^i cells blocks loss} at most $A_{D-1}$ levels are
not used in the active blocks for each group of bits. Also, it is
impossible to use the last separation block, and hence at most
$2A_{D-1}+1$ levels are not used in the worst case, where
\begin{align*}
& 2A_{D-1}+1 \\
& = 2\left(3/2\cdot (q-1)4^{D-1}-7/2\cdot (q-1)2^{D-1}\right)+1\\
& = 3/4\cdot (q-1)4^{D}-7/2\cdot (q-1)2^{D} +1 \\
& = 3/4\cdot (q-1)k^2-7/2\cdot (q-1)k+1.
\end{align*}
\end{proof}

For even values of $q$, we consider every two cells of level $q$
as one cell of level $q'=2q-1$, and we can apply the construction
for odd values of $q$. The code deficiency becomes $3/2\cdot
(q-1)k^2-7\cdot (q-1)k+1$.

\section{Conclusion}\label{sec:conclusion}
In~\cite{JBB07}, the problem of coding to minimize block erasures
in flash memories was first presented. In this work we show an
optimal construction of flash codes for storing two bits. We
believe that our construction is simpler than an earlier optimal
construction presented in~\cite{JBB07}. Our main contribution is
an efficient construction of codes that support the storage of any
number of bits. We show that the order of the code deficiency is
$O(k^2q)$, which is an improvement upon the equivalent
construction in~\cite{JB08}. The upper bound in~\cite{JBB07} on
the guaranteed number of writes implies that the order of the
lower bound on the deficiency is $O(kq)$. Therefore, there is a
gap, which we believe can be reduced, between the write deficiency
orders of our construction and the lower bound.

\section*{Acknowledgment}
The authors wish to thank Hilary Finucane and Michael Mitzenmacher
for pointing out errors in an earlier version of the paper .


\begin{thebibliography}{10}
\providecommand{\url}[1]{#1} \csname url@rmstyle\endcsname
\providecommand{\newblock}{\relax}
\providecommand{\bibinfo}[2]{#2}
\providecommand\BIBentrySTDinterwordspacing{\spaceskip=0pt\relax}
\providecommand\BIBentryALTinterwordstretchfactor{4}
\providecommand\BIBentryALTinterwordspacing{\spaceskip=\fontdimen2\font
plus \BIBentryALTinterwordstretchfactor\fontdimen3\font minus
\fontdimen4\font\relax}
\providecommand\BIBforeignlanguage[2]{{%
\expandafter\ifx\csname l@#1\endcsname\relax
\typeout{** WARNING: IEEEtran.bst: No hyphenation pattern has been}%
\typeout{** loaded for the language `#1'. Using the pattern for}%
\typeout{** the default language instead.}%
\else \language=\csname l@#1\endcsname \fi #2}}

\bibitem{BJB07}
{V.\,Bohossian, A.\,Jiang, and J.\,Bruck}, ``Buffer coding for
asymmetric multi-level memory,'' in {\em Proceedings\ IEEE
International Symposium on Information Theory}, Nice, France, June
2007.

\bibitem{CGOZ99}
{P.\,Cappelletti, C.\,Golla, P.\,Olivo, and E.\,Zanoni} (Editors),
{\em Flash memories},~Boston: Kluwer Academic, 1999.

\bibitem{CSBB07}
{Y.\,Cassuto, M.\,Schwartz, V.\,Bohossian, and J.\,Bruck}, ``Codes
for multi-level flash memories: correcting asymmetric
limited-magnitude errors,'' in {\em Proceedings\ IEEE
International Symposium on Information Theory}, Nice, France, June
2007.

\bibitem{CGM86}
{G.D.\,Cohen, P.\,Godlewski, and F.\,Merkx}, ``Linear binary code
for write-once memories,'' {\em IEEE Trans.\ Inform.\ Theory,}
vol.\,32, pp.\,697-700, October 1986.

\bibitem{ER99}
{B.\,Eitan and A.\,Roy}, ``Binary and multilevel flash cells,'' in
{Flash Memories, P. Cappelletti, C. Golla, P. Olivo, E. Zanoni
Eds. Kluwer}, pp. 91-152, 1999.

\bibitem{FS84}
{A.\,Fiat and A.\,Shamir}, ``Generalized write-once memories,''
{\em IEEE Trans.\ Inform.\ Theory,} vol.\,30, pp.\,470--480,
September 1984.

\bibitem{GT05}
{E.\,Gal and S.\,Toledo}, ``Algorithms and data structures for
flash memories,'' {\em ACM Computing Surveys}, vol.\,37, no.\,2,
pp.\,138--163, June 2005.

\bibitem{GCKT03}
{S.\,Gregori, A.\,Cabrini, O.\,Khouri, and G.\,Torelli}, ``On-chip
error correcting techniques for new-generation flash memories,''
in {\em Proceedings of The IEEE}, vol. 91, no. 4, pp. 602-616,
April 2003.

\bibitem{GLR03}
{M.\,Grossi, M.\,Lanzoni, and B.\,Ricco}, ``Program schemes for
multilevel flash memories,'' in {\em Proceedings of the IEEE},
vol. 91, no. 4, pp. 594-601, April 2003.

\bibitem{Jiang-Allerton}
{A.\,Jiang}, ``Information storage in flash memories with floating
codes,'' in {\em Proceedings\ 45-th Annual Allerton Conference on
Communication, Control and Computing}, Monticello, IL, September
2007.

\bibitem{J07}
{A.\,Jiang}, ``On the generalization of error-correcting WOM
codes,'' in {\em Proceedings\ IEEE International Symposium on
Information Theory}, Nice, France, June 2007.

\bibitem{JBB07}
{A.\,Jiang, V.\,Bohossian, and J.\,Bruck}, ``Floating codes for
joint information storage in write asymmetric memories,'' in {\em
Proceedings\ IEEE International Symposium on Information Theory},
Nice, France, June 2007.

\bibitem{JB08}
{A.\,Jiang and J.\,Bruck}, ``Joint coding for flash memory
storage,'' in {\em Proceedings\ IEEE International Symposium on
Information Theory}, Toronto, Canada, July 2008.

\bibitem{JMSB08}
{A.\,Jiang, R.\,Mateescu, M.\,Schwartz, and J.\,Bruck}, ``Rank
modulation for flash memories,'' in {\em Proceedings\ IEEE
International Symposium on Information Theory}, Toronto, Canada,
July 2008.

\bibitem{JSB08}
{A.\,Jiang, M.\,Schwartz, and J.\,Bruck}, ``Error-correcting codes
for rank modulation,'' in {\em Proceedings\ IEEE International
Symposium on Information Theory}, Toronto, Canada, July 2008.

\bibitem{KS67}
{D.\,Kahng and S.M.\,Sze}, ``A floating-gate and its application
to memory devices,'' {\em Bell Systems Tech.\ J.}, vol.\,46,
no.\,4, pp.\,1288-1295, 1967.

\bibitem{MLF08}
{M.\,Mitzenmacher, Z.\, Liu, and H.\, Finucane}, ``Designing
floating codes for expected performance,'' in {\em Proceedings\
46-th Annual Allerton Conference on Communication, Control and
Computing}, Monticello, IL, September 2008.

\bibitem{RS82}
{R.L.\,Rivest and A.\,Shamir}, ``How to reuse a write-once
memory,'' {\em Information and Control}, vol.\,55, nos.\,1--3,
pp.\,1--19, December 1982.

\bibitem{vZ97}
{B.\ Van Zeghbroeck}, {\em Principles of semiconductor devices},
e-book published online at
\hspace*{1ex}\emph{ece-www.colorado.edu/~bart/book}, 1997.

\end{thebibliography}
\end{document}